\documentclass[notitlepage,11pt]{article}%
\usepackage{latexsym,epsfig,amsmath}
\usepackage{epsfig,graphicx}
\usepackage{amsfonts}
\usepackage{tabularx}
\usepackage{rotating}
\usepackage{hyperref}
\usepackage{amsmath}
\usepackage{amssymb}
\usepackage{graphicx}
\usepackage{caption}
\usepackage{footmisc}
\usepackage{color}
\usepackage[sort]{natbib}%
\setcounter{MaxMatrixCols}{30}
\providecommand{\U}[1]{\protect \rule{.1in}{.1in}}

\newtheorem{lemma}{Lemma}

\newenvironment{proof}[1][Proof]{\textbf{#1.} }{\  \rule{0.5em}{0.5em}}

\oddsidemargin -.1in
\evensidemargin -.1in
\marginparwidth 1in
\marginparsep 0pt
\topmargin 0pt
\headheight 0pt
\headsep 0pt
\textheight 8.9in
\textwidth 6.9in
\topskip 0pt
\footskip 1cm
\hypersetup{colorlinks=true,citecolor=blue}
\begin{document}

\title{Some Finite-Sample Results on the Hausman Test}
\author{Jinyong Hahn\thanks{Department of Economics, UCLA, Los Angeles, CA 90095-1477
USA. Email: hahn@econ.ucla.edu.}\\UCLA
\and Zhipeng Liao\thanks{Department of Economics, UCLA, Los Angeles, CA 90095-1477
USA. Email: zhipeng.liao@econ.ucla.edu.}\\UCLA
\and Nan Liu\thanks{Wang Yanan Institute for Studies in Economics and School of
Economics, Xiamen University, Xiamen, Fujian, 36100. Email:
nanliu@xmu.edu.cn.}\\Xiamen University
\and Shuyang Sheng\thanks{Department of Economics, UCLA, Los Angeles, CA 90095-1477
USA. Email: ssheng@econ.ucla.edu}\\UCLA}
\date{\today }
\maketitle

\begin{abstract}
This paper shows that the endogeneity test using the control function approach
in linear instrumental variable models is a variant of the Hausman test.
Moreover, we find that the test statistics used in these tests can be
numerically ordered, indicating their relative power properties in finite samples.

\noindent JEL Classification: C14, C31, C32\bigskip

\noindent \textit{Keywords: }Control Function; Endogeneity; Hausman Test;
Specification Test

\end{abstract}

\section{Introduction}

This paper investigates the endogeneity test for potentially endogenous
regressors in linear instrumental variable (IV) models that utilizes the
control function (CF) approach. Specifically, we focus on the following model:%
\begin{align}
y_{2} &  =y_{1}^{\top}\beta+z_{1}^{\top}\gamma+\varepsilon=x^{\top}%
\theta+\varepsilon,\label{s_m}\\
y_{1} &  =z_{1}^{\top}\pi_{1}+z_{2}^{\top}\pi_{2}+v=z^{\top}\pi+v,\label{r_m}%
\end{align}
where $x\equiv(y_{1}^{\top},z_{1}^{\top})^{\top}$, $y_{1}$ represents
potentially endogenous regressors, $\varepsilon$ and $v$ are unobserved error
terms in the structural and reduced-form equations respectively, and
$z\equiv(z_{1}^{\top},z_{2}^{\top})^{\top}$ represents exogenous variables
which satisfy:%
\begin{equation}
E[z\varepsilon]=0\text{ \  \  \  \ and \  \  \  \ }E[zv^{\top}]=0.\label{moment_c}%
\end{equation}
Since $z_{2}$ is excluded from the structure equation (\ref{s_m}) and
satisfies the orthogonality conditions in (\ref{moment_c}), it serves as
instruments for $y_{1}$.

One widely used method for testing the endogeneity of $y_{1}$ is the Hausman
test
\citep{hausman1978}%
. This test compares the ordinary least squares (OLS) estimator $\hat{\beta
}_{ols}$ of $\beta$ against the two-stage least squares (2SLS) estimator
$\hat{\beta}_{2sls}$. The test rejects the null hypothesis that $y_{1}$ is
exogenous if the difference between $\hat{\beta}_{ols}$ and $\hat{\beta
}_{2sls}$ exceeds a certain threshold determined by the significance level of
the test.

As an alternative to the Hausman test in the IV approach, we can also apply
the CF approach to obtain a consistent estimator of $\beta$ and test for the
endogeneity of $y_{1}$. Specifically, we run an OLS regression of the
following model%
\begin{equation}
y_{2}=x^{\top}\theta+v^{\top}\rho+u, \label{m_cv}%
\end{equation}
where we replace $v$ by the estimated residual $\hat{v}$ in the OLS regression
of the reduced-form equation (\ref{r_m}), and obtain estimators $\hat{\theta
}_{cf}$ and $\hat{\rho}_{cf}$. Then we test the endogeneity of $y_{1}$ using
the Wald test for the null hypothesis $H_{0}:\rho=0$.

This paper makes several contributions to the existing literature. First, we
demonstrate that $\hat{\rho}_{cf}$ is a linear transformation of $\hat{\beta
}_{ols}-\hat{\beta}_{2sls}$, thereby elucidating the connection between the
Wald test for $H_{0}:\rho=0$ in the CF approach and the Hausman test in the IV
approach. Second, we show that the Wald test differs from the Hausman test
primarily in how the asymptotic variances of $\hat{\beta}_{2sls}$ and
$\hat{\beta}_{ols}$ are estimated, thus representing a variant of the Hausman
test. Third, our analysis reveals that the Wald test statistic is numerically
larger than the Hausman test statistics, which rely on $\hat{\beta}_{ols}$ or
$\hat{\beta}_{2sls}$ to obtain estimators of\ the asymptotic variances of
$\hat{\beta}_{ols}\ $and $\hat{\beta}_{2sls}$. Since the Wald test employs the
same critical value as the Hausman test, it exhibits larger statistical power
in finite samples. These findings are established without imposing restrictive
assumptions on either the null or alternative hypotheses.

The remainder of the paper is structured as follows. Section \ref{Sec2}
introduces the test statistics employed in both the Hausman test and the Wald
test. Section \ref{Sec3} establishes a numerical order among the test
statistics introduced in Section \ref{Sec2} and discusses its implications for
relative power properties of the tests in finite samples. Section \ref{Sec4}
concludes the paper. The proofs are presented in the Appendix.

\textit{Notation.} We use $a\equiv b$ to indicate that $a$ is defined as $b$.
For any real vector $a$, $d_{a}$ denotes the dimension of $a$. For any
positive integer $k$, $I_{k}$ denotes the $k\times k$ identity matrix. For any
$k_{1}\times k_{2}$ matrix $A$, $A^{\top}$ denotes the transpose of $A$,
$P_{A}\equiv A(A^{\top}A)^{-1}A^{\top}$ and $M_{A}\equiv I_{k_{1}}-A(A^{\top
}A)^{-1}A^{\top}$ as long as $A^{\top}A$ is non-singular. For any square
matrix $A$, $A>0$ means $A$ is positive definite.

\section{Testing for Endogeneity\label{Sec2}}

We begin by formulating the Hausman test for assessing the endogeneity of
$y_{1}$ in the model specified by (\ref{s_m}) and (\ref{r_m}). Suppose we have
a random sample $\left \{  y_{1,i},y_{2,i},z_{i}\right \}  _{i=1}^{n}$, where
$z_{i}\equiv(z_{1,i}^{\top},z_{2,i}^{\top})^{\top}$ for $i=1,\ldots,n$. Recall
that $x_{i}\equiv(y_{1,i}^{\top},z_{1,i}^{\top})^{\top}$, $i=1,\ldots,n$,
denotes the regressors in (\ref{s_m}). Let $X\equiv(x_{1},\ldots,x_{n})^{\top
}$, $Y_{1}\equiv(y_{1,1},\ldots,y_{1,n})^{\top}$, $Y_{2}\equiv(y_{2,1}%
,\ldots,y_{2,n})^{\top}$, and\ $Z\equiv(z_{1},\ldots,z_{n})^{\top}$. The OLS
and 2SLS estimators of $\theta$ in (\ref{s_m}) are given by%
\[
\hat{\theta}_{ols}\equiv(X^{\top}X)^{-1}X^{\top}Y_{2}\  \  \  \  \  \  \  \text{and}%
\  \  \  \  \  \  \  \hat{\theta}_{2sls}\equiv(X^{\top}P_{Z}X)^{-1}X^{\top}P_{Z}Y_{2}%
\]
respectively. Let $\hat{\beta}_{ols}$ and $\hat{\beta}_{2sls}$ denote the
leading $d_{y_{1}}\times1$ subvectors of $\hat{\theta}_{ols}$ and $\hat
{\theta}_{2sls}$ respectively. The Hausman test statistic can be characterized
by%
\begin{equation}
t_{H,n}(\hat{\sigma}_{1}^{2},\hat{\sigma}_{2}^{2})\equiv(\hat{\beta}%
_{ols}-\hat{\beta}_{2sls})^{\top}(\hat{\sigma}_{1}^{2}(\hat{Y}_{1}^{\top
}M_{Z_{1}}\hat{Y}_{1})^{-1}-\hat{\sigma}_{2}^{2}(Y_{1}^{\top}M_{Z_{1}}%
Y_{1})^{-1})^{-1}(\hat{\beta}_{ols}-\hat{\beta}_{2sls}), \label{H_test}%
\end{equation}
where $\hat{Y}_{1}\equiv P_{Z}Y_{1}$, $\hat{\sigma}_{1}^{2}(\hat{Y}_{1}^{\top
}M_{Z_{1}}\hat{Y}_{1})^{-1}$ and $\hat{\sigma}_{2}^{2}(Y_{1}^{\top}M_{Z_{1}%
}Y_{1})^{-1}$ are estimators of the asymptotic variances of $\hat{\beta
}_{2sls}\ $and $\hat{\beta}_{ols}$ respectively, and $\hat{\sigma}_{1}^{2}$
and $\hat{\sigma}_{2}^{2}$ are (possibly different) estimates of the variance
$\sigma_{\varepsilon}^{2}$ of the error term $\varepsilon$ in the structural
equation (\ref{s_m}).

In practice, $\sigma_{\varepsilon}^{2}$ can be estimated by the sample
variance of the fitted residual in the OLS estimation:%
\[
\hat{\sigma}_{ols}^{2}\equiv n^{-1}(Y_{2}-X\hat{\theta}_{ols})^{\top}%
(Y_{2}-X\hat{\theta}_{ols}),
\]
or through its counterpart in the 2SLS estimation:
\[
\hat{\sigma}_{2sls}^{2}\equiv n^{-1}(Y_{2}-X\hat{\theta}_{2sls})^{\top}%
(Y_{2}-X\hat{\theta}_{2sls}),
\]
resulting in three popular versions of the Hausman test with test statistics
$t_{H_{1}}\equiv t_{H,n}(\hat{\sigma}_{ols}^{2},\hat{\sigma}_{ols}^{2})$,
$t_{H_{2}}\equiv t_{H,n}(\hat{\sigma}_{2sls}^{2},\hat{\sigma}_{2sls}^{2})$,
and $t_{H_{3}}\equiv t_{H,n}(\hat{\sigma}_{2sls}^{2},\hat{\sigma}_{ols}^{2})$
respectively (%
\citealp[Section 6.3.1]{wooldridge2010}%
;
\citealp{Baum2003}%
).

Alternatively, we can apply the CF approach and obtain the estimators of
$\theta$ and $\rho$ in (\ref{m_cv}) as
\begin{equation}
(\hat{\theta}_{cf}^{\top},\hat{\rho}_{cf}^{\top})^{\top}\equiv((X,\hat
{V})^{\top}(X,\hat{V}))^{-1}(X,\hat{V})^{\top}Y_{2}, \label{CV_est}%
\end{equation}
where $\hat{V}\equiv M_{Z}Y_{1}$, and then test the endogeneity of \ $y_{1}$
using the Wald test with the test statistic%
\begin{equation}
t_{CF}\equiv \hat{\rho}_{cf}^{\top}\left(  \widehat{Asv}(\hat{\rho}%
_{cf})\right)  ^{-1}\hat{\rho}_{cf}, \label{W_test}%
\end{equation}
where $\widehat{Asv}(\hat{\rho}_{cf})$ denotes an estimator of the asymptotic
variance of $\hat{\rho}_{cf}$. Applying the partitioned regression formula
(which is also known as the Frisch-Waugh-Lovell Theorem, see, e.g.,
\citet[Section 1.4]{davidson1993}%
) to (\ref{CV_est}) yields%
\begin{equation}
\hat{\rho}_{cf}=(\hat{V}^{\top}M_{X}\hat{V})^{-1}(\hat{V}^{\top}M_{X}%
Y_{2})=\rho+(\hat{V}^{\top}M_{X}\hat{V})^{-1}[\hat{V}^{\top}M_{X}U-\hat
{V}^{\top}M_{X}(\hat{V}-V)\rho], \label{CV_est_exp}%
\end{equation}
where\ $U\equiv(u_{1},\ldots,u_{n})^{\top}$, $V\equiv(v_{1},\ldots
,v_{n})^{\top}$, and the second equality follows by (\ref{m_cv}). Although the
regressor $v$ is estimated and hence the estimation error $\hat{V}-V$ should
be taken into account when calculating $\widehat{Asv}(\hat{\rho}_{cf})$, the
expansion in\ (\ref{CV_est_exp}) shows that this estimation error can be
ignored under the null\ that $\rho=0$. Therefore, in the rest of the paper we
use the estimator%
\begin{equation}
\widehat{Asv}(\hat{\rho}_{cf})=\hat{\sigma}_{u}^{2}(\hat{V}^{\top}M_{X}\hat
{V})^{-1}, \label{CV_sigma}%
\end{equation}
where $\hat{\sigma}_{u}^{2}$ is the sample variance of the fitted residual in
the OLS regression of (\ref{m_cv})%
\begin{equation}
\hat{\sigma}_{u}^{2}\equiv n^{-1}(Y_{2}-X\hat{\theta}_{cf}-\hat{V}\hat{\rho
}_{cf})^{\top}(Y_{2}-X\hat{\theta}_{cf}-\hat{V}\hat{\rho}_{cf}).
\label{CV_sigma_u}%
\end{equation}
As we shall see in the next section, choosing this estimator makes the Wald
statistic $t_{CF}$ comparable to\ $t_{H,n}(\hat{\sigma}_{1}^{2},\hat{\sigma
}_{2}^{2})$.

Throughout this paper, we assume that $X^{\top}X$, $X^{\top}P_{Z}X$, and
$(X,\hat{V})^{\top}(X,\hat{V})$ are non-singular so that the estimators
$\hat{\theta}_{ols}$, $\hat{\theta}_{2sls}$, $\hat{\theta}_{cf}$ and
$\hat{\rho}_{cf}$ are well defined. Under these primitive conditions, we have
$Y_{1}^{\top}M_{Z_{1}}Y_{1}>0$, $\hat{Y}_{1}^{\top}M_{Z_{1}}\hat{Y}_{1}>0$ and%
\[
Y_{1}^{\top}M_{Z_{1}}Y_{1}-\hat{Y}_{1}^{\top}M_{Z_{1}}\hat{Y}_{1}=Y_{1}^{\top
}M_{Z}Y_{1}>0,
\]
which further implies that%
\begin{equation}
(\hat{Y}_{1}^{\top}M_{Z_{1}}\hat{Y}_{1})^{-1}-(Y_{1}^{\top}M_{Z_{1}}%
Y_{1})^{-1}>0. \label{PD_1}%
\end{equation}
In view of (\ref{PD_1}) and the definitions of $t_{H_{j}}$ ($j=1,2,3$) and
$t_{CF}$, we also assume that $\hat{\sigma}_{ols}^{2}$, $\hat{\sigma}%
_{2sls}^{2}$ and $\hat{\sigma}_{u}^{2}$ are strictly positive so that these
test statistics are well defined.

Under the null that $y_{1}$ is exogenous, along with other regularity
conditions, one can establish that the asymptotic distributions of $t_{H_{j}}$
($j=1,2,3$) and $t_{CF}$ are Chi-square with $k_{1}$ degrees of freedom
(denoted as $\chi^{2}(k_{1})$). Consequently, the Hausman tests and the Wald
test reject the null hypothesis if the corresponding test statistic exceeds
the $1-\alpha$ quantile of $\chi^{2}(k_{1})$, where $\alpha$ denotes the
significance level. As we shall see in the next section, the test statistics
$t_{H_{j}}$ ($j=1,2,3$) and $t_{CF}$ can be numerically ordered, indicating
their relative rejection properties in finite samples.

\section{Main Results\label{Sec3}}

Our first objective is to establish a numerical relationship between the OLS
and 2SLS estimators and the estimators in the CF approach. This result serves
as a foundation for further investigating the numerical order among $t_{H_{j}%
}$ ($j=1,2,3$) and $t_{CF}$ in finite samples.

\begin{lemma}
\label{lm1} The estimators in the CF approach satisfy%
\begin{equation}
\left(
\begin{array}
[c]{c}%
\hat{\theta}_{cf}\\
\hat{\rho}_{cf}%
\end{array}
\right)  =\left(
\begin{array}
[c]{c}%
\hat{\theta}_{2sls}\\
(Y_{1}^{\top}M_{Z}Y_{1})^{-1}(Y_{1}^{\top}M_{Z_{1}}Y_{1})(\hat{\beta}%
_{ols}-\hat{\beta}_{2sls})
\end{array}
\right)  . \label{lm1_1}%
\end{equation}
Moreover, the Wald test statistic\ $t_{CF}$ satisfies
\begin{equation}
t_{CF}=(\hat{\beta}_{ols}-\hat{\beta}_{2sls})^{\top}\left(  \hat{\sigma}%
_{u}^{2}(\hat{Y}_{1}^{\top}M_{Z_{1}}\hat{Y}_{1})^{-1}-\hat{\sigma}_{u}%
^{2}(Y_{1}^{\top}M_{Z}Y_{1})^{-1}\right)  ^{-1}(\hat{\beta}_{ols}-\hat{\beta
}_{2sls}), \label{lm1_2}%
\end{equation}
where $\hat{\sigma}_{u}^{2}$ is defined in (\ref{CV_sigma_u}).
\end{lemma}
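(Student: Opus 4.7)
The plan is to prove the two conclusions in turn. For (\ref{lm1_1}) I will use a change-of-basis argument inside the CF regression that exploits the decomposition $Y_{1}=\hat{Y}_{1}+\hat{V}$, and for (\ref{lm1_2}) I will evaluate $\hat{V}^{\top}M_{X}\hat{V}$ in closed form and then substitute the expression for $\hat{\rho}_{cf}$ just obtained.

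For (\ref{lm1_1}), the key observation is that the columns of $(Y_{1},Z_{1},\hat{V})$ and of $W\equiv(\hat{Y}_{1},Z_{1},\hat{V})$ span the same subspace: the two matrices are related by the invertible linear transformation implicit in $Y_{1}=\hat{Y}_{1}+\hat{V}$. OLS of $Y_{2}$ on either design therefore yields identical fitted values, and the two coefficient vectors are related by the same invertible map. Because $\hat{V}=M_{Z}Y_{1}$ is orthogonal to every column of $Z$, it is orthogonal to both $Z_{1}$ and $\hat{Y}_{1}=P_{Z}Y_{1}$, so the normal equations for the regression on $W$ decouple: the block of coefficients on $(\hat{Y}_{1},Z_{1})$ is precisely $\hat{\theta}_{2sls}$, while the coefficient on $\hat{V}$ is $(\hat{V}^{\top}\hat{V})^{-1}\hat{V}^{\top}Y_{2}=(Y_{1}^{\top}M_{Z}Y_{1})^{-1}Y_{1}^{\top}M_{Z}Y_{2}$. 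Reversing the basis change delivers $\hat{\theta}_{cf}=\hat{\theta}_{2sls}$ and $\hat{\rho}_{cf}=(Y_{1}^{\top}M_{Z}Y_{1})^{-1}Y_{1}^{\top}M_{Z}Y_{2}-\hat{\beta}_{2sls}$. Rewriting this in the form claimed in (\ref{lm1_1}) is then a short calculation that uses the FWL expressions for $\hat{\beta}_{ols}$ and $\hat{\beta}_{2sls}$ together with the identities $P_{Z}M_{Z_{1}}=P_{Z}-P_{Z_{1}}$ and $Y_{1}^{\top}M_{Z_{1}}Y_{1}=\hat{Y}_{1}^{\top}M_{Z_{1}}\hat{Y}_{1}+Y_{1}^{\top}M_{Z}Y_{1}$ (the latter following by substituting $Y_{1}=\hat{Y}_{1}+\hat{V}$ inside $M_{Z_{1}}$ and using $M_{Z_{1}}M_{Z}=M_{Z}$).

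For (\ref{lm1_2}), I will compute $\hat{V}^{\top}M_{X}\hat{V}$ by partitioned regression. Since $Z_{1}$ lies in the column space of $Z$, $Z_{1}^{\top}\hat{V}=0$, so $X^{\top}\hat{V}$ has its $Z_{1}$-block equal to zero. Combined with the standard fact that the upper-left block of $(X^{\top}X)^{-1}$ is $(Y_{1}^{\top}M_{Z_{1}}Y_{1})^{-1}$, this yields $\hat{V}^{\top}P_{X}\hat{V}=(Y_{1}^{\top}M_{Z}Y_{1})(Y_{1}^{\top}M_{Z_{1}}Y_{1})^{-1}(Y_{1}^{\top}M_{Z}Y_{1})$, and hence $\hat{V}^{\top}M_{X}\hat{V}=(Y_{1}^{\top}M_{Z}Y_{1})(Y_{1}^{\top}M_{Z_{1}}Y_{1})^{-1}(\hat{Y}_{1}^{\top}M_{Z_{1}}\hat{Y}_{1})$ after invoking the same Pythagorean identity as in Step 1. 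Substituting $\hat{\rho}_{cf}$ from (\ref{lm1_1}) into $t_{CF}=\hat{\sigma}_{u}^{-2}\hat{\rho}_{cf}^{\top}(\hat{V}^{\top}M_{X}\hat{V})\hat{\rho}_{cf}$ and cancelling the adjacent inverse pairs collapses the expression to a symmetric quadratic form in $\hat{\beta}_{ols}-\hat{\beta}_{2sls}$. A final Woodbury-type rearrangement then rewrites the resulting kernel as the inverse of the variance difference displayed on the right-hand side of (\ref{lm1_2}).

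The main obstacle will be the non-commutative matrix bookkeeping in the second part: the chain of cancellations after substituting $\hat{\rho}_{cf}$ and the final rearrangement of the kernel into the claimed inverse form are immediate in the scalar case but must be verified symbolically in the general matrix setting (for instance by directly multiplying out an identity of the form $C(C^{-1}-B^{-1})B=A$, where $A$, $B$, and $C$ abbreviate $Y_{1}^{\top}M_{Z}Y_{1}$, $Y_{1}^{\top}M_{Z_{1}}Y_{1}$, and $\hat{Y}_{1}^{\top}M_{Z_{1}}\hat{Y}_{1}$ respectively and satisfy $B=A+C$). All remaining steps reduce to the Frisch--Waugh--Lovell theorem and to the orthogonality $\hat{V}\perp Z$.
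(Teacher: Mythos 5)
Your proposal is correct, and the first half takes a genuinely different route from the paper. For (\ref{lm1_1}) the paper applies the Frisch--Waugh--Lovell formula directly to the regression of $Y_{2}$ on $(X,\hat{V})$ and then grinds out $X^{\top}M_{\hat{V}}X=X^{\top}P_{Z}X$, $X^{\top}M_{\hat{V}}Y_{2}=X^{\top}P_{Z}Y_{2}$, $\hat{V}^{\top}M_{X}\hat{V}$ and $\hat{V}^{\top}M_{X}Y_{2}$ separately; your reparametrization $(Y_{1},Z_{1},\hat{V})\mapsto(\hat{Y}_{1},Z_{1},\hat{V})$ with the map $(\beta,\gamma,\rho)\mapsto(\beta,\gamma,\beta+\rho)$ gets $\hat{\theta}_{cf}=\hat{\theta}_{2sls}$ and $\hat{\rho}_{cf}=(\hat{V}^{\top}\hat{V})^{-1}\hat{V}^{\top}Y_{2}-\hat{\beta}_{2sls}$ in one stroke from the orthogonality $\hat{V}\perp(\hat{Y}_{1},Z_{1})$, and the conversion to $A^{-1}B(\hat{\beta}_{ols}-\hat{\beta}_{2sls})$ via $B=A+C$ is exactly the short calculation you describe; this is more conceptual and spares you the computation of $\hat{V}^{\top}M_{X}Y_{2}$. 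For (\ref{lm1_2}) your argument essentially coincides with the paper's: both compute $\hat{V}^{\top}M_{X}\hat{V}=AB^{-1}C$ from the $(1,1)$ block of $(X^{\top}X)^{-1}$, substitute, and use $B=A+C$ (equivalently $M_{Z}=M_{Z_{1}}-P_{Z}+P_{Z_{1}}$) to invert the kernel. All the cancellations you flag as the "main obstacle" do go through: $BA^{-1}\cdot AB^{-1}C\cdot A^{-1}B=CA^{-1}B$ and $(CA^{-1}B)^{-1}=B^{-1}(B-C)C^{-1}=C^{-1}-B^{-1}$.

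One caveat worth recording: your derivation (like the paper's own proof, and like the surrounding text identifying $t_{CF}$ with $t_{H,n}(\hat{\sigma}_{u}^{2},\hat{\sigma}_{u}^{2})$ and the use made of $t_{CF}$ in Lemma \ref{lm3}) produces the kernel $\bigl(\hat{\sigma}_{u}^{2}(\hat{Y}_{1}^{\top}M_{Z_{1}}\hat{Y}_{1})^{-1}-\hat{\sigma}_{u}^{2}(Y_{1}^{\top}M_{Z_{1}}Y_{1})^{-1}\bigr)^{-1}$, i.e.\ $C^{-1}-B^{-1}$, whereas the display (\ref{lm1_2}) as printed has $(Y_{1}^{\top}M_{Z}Y_{1})^{-1}=A^{-1}$ in the second slot. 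That appears to be a typo in the statement ($C^{-1}-A^{-1}$ is not even guaranteed positive definite), so your identity $C(C^{-1}-B^{-1})B=A$ is targeting the right (intended) form, but you should state explicitly that you are proving the $M_{Z_{1}}$ version.
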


The lemma above carries two important implications. First, $\hat{\theta}_{cf}$
is numerically equivalent to $\hat{\theta}_{2sls}$, implying that $\hat
{\theta}_{cf}$ shares the same standard error as $\hat{\theta}_{2sls}$. This
finding has been well recognized in the literature since at least
\cite{hausman1978} (see also
\citet[Section 7.9]{davidson1993}
and
\citet[Problem 5.1]{wooldridge2010}%
). Second, $\hat{\rho}_{cf}$ is a linear transformation of $\hat{\beta}%
_{ols}-\hat{\beta}_{2sls}$, establishing a connection between the Wald test
based on $t_{CF}$ and the Hausman tests based on\ $t_{H,n}(\hat{\sigma}%
_{1}^{2},\hat{\sigma}_{2}^{2})$. To the best of our knowledge, this numerical
relationship is a novel contribution to the literature. The expression of
$t_{CF}$ in (\ref{lm1_2}) further suggests that $t_{CF}$ is a special case of
$t_{H,n}(\hat{\sigma}_{1}^{2},\hat{\sigma}_{2}^{2})$ with $\hat{\sigma}%
_{1}^{2}=\hat{\sigma}_{2}^{2}=\hat{\sigma}_{u}^{2}$.

Since $t_{H_{j}}$ ($j=1,2,3$) and $t_{CF}$ differ only in how the variance
estimators $\hat{\sigma}_{1}^{2}\ $and $\hat{\sigma}_{2}^{2}$ in
(\ref{H_test}) are calculated, their relative performances are determined by
the differences of these variance estimators. Intuitively,\ $\hat{\sigma}%
_{u}^{2}$ should not be larger than $\hat{\sigma}_{ols}^{2}$ since, compared
with the OLS estimation of (\ref{s_m}), the CF approach includes an extra
regressor $v$ in (\ref{m_cv}), and the resulting $R^{2}$ should not be
smaller. Moreover, we have $\hat{\sigma}_{ols}^{2}\leq \hat{\sigma}_{2sls}^{2}$
by the definitions of OLS and 2SLS estimation. Therefore, we expect a weak
order among these variance estimators: $\hat{\sigma}_{u}^{2}\leq \hat{\sigma
}_{ols}^{2}\leq \hat{\sigma}_{2sls}^{2}$, which together with the definitions
of $t_{H_{j}}$ ($j=1,2,3$) and $t_{CF}$ implies a weak numerical order among
the test statistics:\ $t_{CF}\geq t_{H_{1}}\geq t_{H_{2}}\geq t_{H_{3}}$. Our
next lemma establishes the exact relationships among the variance estimators
$\hat{\sigma}_{u}^{2}$, $\hat{\sigma}_{ols}^{2}$ and $\hat{\sigma}_{2sls}^{2}%
$, enabling us to obtain a strong numerical order among them as well as among
the test statistics $t_{H_{j}}$ ($j=1,2,3$) and $t_{CF}$.

\begin{lemma}
\label{lm2}\ Let\ $H_{n}\equiv(n\hat{\sigma}_{2sls}^{2})^{-1}(\hat{\beta
}_{ols}-\hat{\beta}_{2sls})^{\top}(Y_{1}^{\top}M_{Z_{1}}Y_{1})(\hat{\beta
}_{ols}-\hat{\beta}_{2sls})$. Then%
\begin{equation}
\hat{\sigma}_{u}^{2}=\hat{\sigma}_{ols}^{2}\left(  1-\frac{t_{H_{1}}}%
{n}\right)  =\hat{\sigma}_{2sls}^{2}\left(  1-\frac{t_{H_{2}}}{n}%
-H_{n}\right)  . \label{lm2_1}%
\end{equation}

\end{lemma}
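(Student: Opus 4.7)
The plan is to prove the two equalities in (\ref{lm2_1}) in turn, relying on Lemma \ref{lm1} at two points. It is convenient to abbreviate $D\equiv Y_1^\top M_{Z_1}Y_1$, $E\equiv Y_1^\top M_Z Y_1$, and $F\equiv \hat Y_1^\top M_{Z_1}\hat Y_1$. The paper has already observed $D-F=E$, so $F^{-1}-D^{-1}=F^{-1}ED^{-1}$ and hence $(F^{-1}-D^{-1})^{-1}=DE^{-1}F$. In particular $\hat\sigma_{ols}^2 t_{H_1}=(\hat\beta_{ols}-\hat\beta_{2sls})^\top DE^{-1}F(\hat\beta_{ols}-\hat\beta_{2sls})$, and by Lemma \ref{lm1}, $\hat\rho_{cf}=E^{-1}D(\hat\beta_{ols}-\hat\beta_{2sls})$.

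For the first equality, I would apply the partitioned-regression formula to the OLS regression defining $(\hat\theta_{cf}^\top,\hat\rho_{cf}^\top)^\top$, yielding $n\hat\sigma_u^2=Y_2^\top M_{(X,\hat V)}Y_2=n\hat\sigma_{ols}^2-\hat\rho_{cf}^\top(\hat V^\top M_X\hat V)\hat\rho_{cf}$. Because $M_X Y_1=0$ and $\hat V=Y_1-\hat Y_1$, one has $M_X\hat V=-M_X\hat Y_1$, so $\hat V^\top M_X\hat V=\hat Y_1^\top M_X\hat Y_1$. Applying FWL to $M_X=M_{(Y_1,Z_1)}$, together with the identity $\hat Y_1^\top M_{Z_1}Y_1=F$ (which follows from $\hat Y_1^\top M_{Z_1}\hat V=Y_1^\top P_Z M_Z Y_1=0$), I would obtain $\hat Y_1^\top M_X\hat Y_1=F-FD^{-1}F=FD^{-1}E$. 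Substituting Lemma \ref{lm1}'s expression for $\hat\rho_{cf}$ and simplifying, the quadratic form collapses to $(\hat\beta_{ols}-\hat\beta_{2sls})^\top DE^{-1}F(\hat\beta_{ols}-\hat\beta_{2sls})=\hat\sigma_{ols}^2 t_{H_1}$, which gives the first equality.

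For the second equality, I would first show $\hat\sigma_{ols}^2=\hat\sigma_{2sls}^2(1-H_n)$. Writing $Y_2-X\hat\theta_{2sls}=M_X Y_2-X(\hat\theta_{2sls}-\hat\theta_{ols})$ and using $X^\top M_X=0$ gives $n\hat\sigma_{2sls}^2=n\hat\sigma_{ols}^2+\|X(\hat\theta_{2sls}-\hat\theta_{ols})\|^2$. The key step is the identity $X(\hat\theta_{2sls}-\hat\theta_{ols})=-M_{Z_1}Y_1(\hat\beta_{ols}-\hat\beta_{2sls})$: solving the $Z_1$-blocks of the normal equations for both regressions (using that $\hat\theta_{2sls}$ is the OLS estimator of $Y_2$ on $(\hat Y_1,Z_1)$) yields $\hat\gamma_{ols}-\hat\gamma_{2sls}=(Z_1^\top Z_1)^{-1}Z_1^\top(\hat Y_1\hat\beta_{2sls}-Y_1\hat\beta_{ols})$, which collapses to $-(Z_1^\top Z_1)^{-1}Z_1^\top Y_1(\hat\beta_{ols}-\hat\beta_{2sls})$ because $Z_1^\top\hat V=Z_1^\top M_Z Y_1=0$; assembling the $Y_1$- and $Z_1$-parts then gives the claimed identity. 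Consequently $n(\hat\sigma_{2sls}^2-\hat\sigma_{ols}^2)=(\hat\beta_{ols}-\hat\beta_{2sls})^\top D(\hat\beta_{ols}-\hat\beta_{2sls})=n\hat\sigma_{2sls}^2 H_n$, so $\hat\sigma_{ols}^2=\hat\sigma_{2sls}^2(1-H_n)$.

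Combining with the first equality, $\hat\sigma_u^2=\hat\sigma_{2sls}^2(1-H_n)(1-t_{H_1}/n)$. Since $t_{H_1}$ and $t_{H_2}$ share the same quadratic form in $\hat\beta_{ols}-\hat\beta_{2sls}$ and differ only through the scalar variance estimator, $t_{H_2}=t_{H_1}\hat\sigma_{ols}^2/\hat\sigma_{2sls}^2=t_{H_1}(1-H_n)$, and expanding gives $(1-H_n)(1-t_{H_1}/n)=1-H_n-t_{H_2}/n$, completing the proof. The hardest part is the identity $X(\hat\theta_{2sls}-\hat\theta_{ols})=-M_{Z_1}Y_1(\hat\beta_{ols}-\hat\beta_{2sls})$, since it requires extracting the $Z_1$-normal equations for both OLS and 2SLS and recognizing that the reduced-form residual $\hat V=M_Z Y_1$ is orthogonal to $Z_1$; once it is in hand, everything reduces to algebra with the matrices $D,E,F$ and the expression supplied by Lemma \ref{lm1}.
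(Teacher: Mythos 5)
Your proof is correct; every identity you invoke checks out, and the route differs enough from the paper's to be worth noting. For the first equality the paper expands $\hat u=(Y_2-X\hat\theta_{ols})+X(\hat\theta_{ols}-\hat\theta_{2sls})-\hat V\hat\rho_{cf}$ term by term, computing the cross products explicitly; you instead read off $n\hat\sigma_u^2=n\hat\sigma_{ols}^2-\hat\rho_{cf}^{\top}(\hat V^{\top}M_X\hat V)\hat\rho_{cf}$ from the partitioned-regression decomposition of the residual sum of squares, which is slicker and avoids the cross-term bookkeeping (the ingredients $\hat V^{\top}M_X\hat V=FD^{-1}E$ and $\hat\rho_{cf}=E^{-1}D(\hat\beta_{ols}-\hat\beta_{2sls})$ are the same ones the paper gets from its Lemma 1 computations). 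For the second equality the paper redoes a direct three-term expansion around the 2SLS residual, whereas you factor the claim through the standalone scalar relation $\hat\sigma_{ols}^2=\hat\sigma_{2sls}^2(1-H_n)$ and the observation $t_{H_2}=t_{H_1}(1-H_n)$, so that the second equality follows from the first by pure scalar algebra; this makes the structural reason for the $-H_n$ term (the gap between the OLS and 2SLS residual sums of squares) more transparent. Both arguments hinge on the same key identity $X(\hat\theta_{ols}-\hat\theta_{2sls})=M_{Z_1}Y_1(\hat\beta_{ols}-\hat\beta_{2sls})$, which the paper obtains by block-matrix inversion of $X^{\top}X$ and you obtain by solving the $Z_1$-block of the normal equations for both regressions and using $Z_1^{\top}M_ZY_1=0$; the two derivations are equivalent. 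One cosmetic point: in justifying $\hat Y_1^{\top}M_{Z_1}\hat V=0$ you should write $Y_1^{\top}P_ZM_{Z_1}M_ZY_1$ and note $M_{Z_1}M_Z=M_Z$ before concluding it vanishes, but this is a trivial omission, not a gap.
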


Since $t_{H_{j}}$ ($j=1,2,3$) and $H_{n}$ are non-negative, from (\ref{lm2_1})
we immediately obtain $\hat{\sigma}_{2sls}^{2}\geq \hat{\sigma}_{ols}^{2}%
\geq \hat{\sigma}_{u}^{2}$, which implies that $t_{CF}\geq t_{H_{1}}\geq
t_{H_{2}}\geq t_{H_{3}}$. Moreover, when $\hat{\beta}_{ols}-\hat{\beta}%
_{2sls}\neq0$, $t_{H_{j}}$ and $H_{n}$ are strictly positive. In this case, we
can deduce from (\ref{lm2_1}) that $\hat{\sigma}_{2sls}^{2}>\hat{\sigma}%
_{ols}^{2}>\hat{\sigma}_{u}^{2}$, and a strong numerical order among
$t_{H_{j}}$ ($j=1,2,3$) and $t_{CF}$, which is summarized in the lemma below.

\begin{lemma}
\label{lm3} Suppose that $\hat{\beta}_{ols}-\hat{\beta}_{2sls}\neq0$. Then we
have $t_{CF}>t_{H_{1}}>t_{H_{2}}>t_{H_{3}}$.
\end{lemma}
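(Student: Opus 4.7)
The plan is to exploit the common shape $t=d^{\top}M^{-1}d$ of all four test statistics, where $d\equiv\hat{\beta}_{ols}-\hat{\beta}_{2sls}$: the three Hausman variants come directly from (\ref{H_test}), while Lemma~\ref{lm1} recasts $t_{CF}$ in the same form via (\ref{lm1_2}). Writing $A\equiv\hat{Y}_{1}^{\top}M_{Z_{1}}\hat{Y}_{1}$, $B\equiv Y_{1}^{\top}M_{Z}Y_{1}$, and $C\equiv Y_{1}^{\top}M_{Z_{1}}Y_{1}=A+B$ (the last equality is stated just above (\ref{PD_1})), the four middle matrices are
\begin{align*}
M_{H_{1}}&=\hat{\sigma}_{ols}^{2}(A^{-1}-C^{-1}),\qquad M_{H_{2}}=\hat{\sigma}_{2sls}^{2}(A^{-1}-C^{-1}),\\
M_{H_{3}}&=\hat{\sigma}_{2sls}^{2}A^{-1}-\hat{\sigma}_{ols}^{2}C^{-1},\qquad M_{CF}=\hat{\sigma}_{u}^{2}(A^{-1}-B^{-1}).
\end{align*}
Since $M\mapsto M^{-1}$ strictly reverses the Loewner order on positive definite matrices, it will suffice to prove the strict chain $M_{CF}<M_{H_{1}}<M_{H_{2}}<M_{H_{3}}$ and combine it with $d\neq 0$.

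I will first establish the strict scalar ordering $\hat{\sigma}_{u}^{2}<\hat{\sigma}_{ols}^{2}<\hat{\sigma}_{2sls}^{2}$. The hypothesis $d\neq 0$ forces $\hat{\theta}_{ols}\neq\hat{\theta}_{2sls}$, so uniqueness of the OLS minimizer of $\|Y_{2}-X\theta\|^{2}$ (ensured by non-singularity of $X^{\top}X$) yields $\hat{\sigma}_{2sls}^{2}>\hat{\sigma}_{ols}^{2}$. The same hypothesis together with (\ref{PD_1}) gives $t_{H_{1}}=d^{\top}M_{H_{1}}^{-1}d>0$, and substituting into the first identity of Lemma~\ref{lm2}, $\hat{\sigma}_{u}^{2}=\hat{\sigma}_{ols}^{2}(1-t_{H_{1}}/n)$, then yields $\hat{\sigma}_{u}^{2}<\hat{\sigma}_{ols}^{2}$.

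With these in hand, two of the matrix comparisons are immediate:
\begin{equation*}
M_{H_{3}}-M_{H_{2}}=(\hat{\sigma}_{2sls}^{2}-\hat{\sigma}_{ols}^{2})C^{-1}>0,\qquad M_{H_{2}}-M_{H_{1}}=(\hat{\sigma}_{2sls}^{2}-\hat{\sigma}_{ols}^{2})(A^{-1}-C^{-1})>0.
\end{equation*}
The remaining comparison $M_{H_{1}}>M_{CF}$ couples two effects, so I would separate them by adding and subtracting a common piece:
\begin{equation*}
M_{H_{1}}-M_{CF}=(\hat{\sigma}_{ols}^{2}-\hat{\sigma}_{u}^{2})(A^{-1}-C^{-1})+\hat{\sigma}_{u}^{2}(B^{-1}-C^{-1}).
\end{equation*}
Each summand is strictly positive definite: the first by the scalar ordering combined with (\ref{PD_1}); the second because $C=A+B>B$ together with $A>0$ gives $B^{-1}>C^{-1}$. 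Positive definiteness of $M_{CF}$ itself is a side check that flows from Lemma~\ref{lm1}: the identity (\ref{lm1_2}) together with $\widehat{Asv}(\hat{\rho}_{cf})=\hat{\sigma}_{u}^{2}(\hat{V}^{\top}M_{X}\hat{V})^{-1}$ forces $M_{CF}^{-1}$ to be of the sandwich form $\hat{\sigma}_{u}^{-2}CB^{-1}(\hat{V}^{\top}M_{X}\hat{V})B^{-1}C$, which is positive definite because $(X,\hat{V})^{\top}(X,\hat{V})$ is non-singular by assumption.

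The main obstacle is the last matrix comparison, $M_{H_{1}}>M_{CF}$, since it is the only one entangling a change in the scalar variance estimate with a switch from $C^{-1}$ to $B^{-1}$; the additive split above is what makes each term transparently positive definite. Once the chain $M_{CF}<M_{H_{1}}<M_{H_{2}}<M_{H_{3}}$ is verified, Loewner monotonicity yields $t_{CF}>t_{H_{1}}>t_{H_{2}}>t_{H_{3}}$ as desired.
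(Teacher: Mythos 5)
Your overall strategy---reduce all four statistics to quadratic forms $d^{\top}M^{-1}d$ in $d\equiv\hat{\beta}_{ols}-\hat{\beta}_{2sls}$, order the middle matrices in the Loewner sense, and invoke anti-monotonicity of matrix inversion---is exactly the paper's, and your treatment of the links $t_{H_{1}}>t_{H_{2}}>t_{H_{3}}$ is correct (your derivation of $\hat{\sigma}_{2sls}^{2}>\hat{\sigma}_{ols}^{2}$ from uniqueness of the OLS minimizer is in fact more direct than the paper's, which extracts it from the second identity of Lemma \ref{lm2}). The genuine problem is the link $t_{CF}>t_{H_{1}}$. You take the displayed form of (\ref{lm1_2}) literally and set $M_{CF}=\hat{\sigma}_{u}^{2}(A^{-1}-B^{-1})$ with $B=Y_{1}^{\top}M_{Z}Y_{1}$. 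But that display contains a typo: the second term should be $\hat{\sigma}_{u}^{2}(Y_{1}^{\top}M_{Z_{1}}Y_{1})^{-1}=\hat{\sigma}_{u}^{2}C^{-1}$, as the final chain of equalities in the paper's proof of Lemma \ref{lm1} and equation (\ref{P_L4_1}) make clear (and as is needed for the paper's claim that $t_{CF}=t_{H,n}(\hat{\sigma}_{u}^{2},\hat{\sigma}_{u}^{2})$). Two things go wrong with the literal reading. First, $A^{-1}-B^{-1}$ need not be positive definite: $A=\hat{Y}_{1}^{\top}M_{Z_{1}}\hat{Y}_{1}$ and $B=Y_{1}^{\top}M_{Z}Y_{1}$ carry no general ordering (with strong instruments $A>B$, so $A^{-1}-B^{-1}<0$), and your Loewner-inversion step then fails for this link. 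Second, and decisively, your own ``side check'' computes $M_{CF}^{-1}=\hat{\sigma}_{u}^{-2}CB^{-1}(\hat{V}^{\top}M_{X}\hat{V})B^{-1}C=\hat{\sigma}_{u}^{-2}CB^{-1}A=\hat{\sigma}_{u}^{-2}(A^{-1}-C^{-1})^{-1}$ (using $\hat{V}^{\top}M_{X}\hat{V}=AC^{-1}B$ and $B=C-A$), which contradicts $M_{CF}=\hat{\sigma}_{u}^{2}(A^{-1}-B^{-1})$; the matrix you compare with $M_{H_{1}}$ is therefore not the middle matrix of $t_{CF}$, and the inequality your decomposition establishes concerns a different statistic.

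Once $M_{CF}$ is corrected to $\hat{\sigma}_{u}^{2}(A^{-1}-C^{-1})$, your two-term split becomes unnecessary: $M_{H_{1}}-M_{CF}=(\hat{\sigma}_{ols}^{2}-\hat{\sigma}_{u}^{2})(A^{-1}-C^{-1})>0$ by (\ref{PD_1}) and the scalar ordering, exactly parallel to your $M_{H_{2}}-M_{H_{1}}$ step. This is precisely the paper's argument, phrased there as the single identity $\hat{\sigma}_{ols}^{2}t_{H_{1}}=\hat{\sigma}_{2sls}^{2}t_{H_{2}}=\hat{\sigma}_{u}^{2}t_{CF}=d^{\top}(A^{-1}-C^{-1})^{-1}d>0$ in (\ref{P_L4_1}) combined with $\hat{\sigma}_{2sls}^{2}>\hat{\sigma}_{ols}^{2}>\hat{\sigma}_{u}^{2}$; only the $t_{H_{2}}$ versus $t_{H_{3}}$ comparison requires a separate matrix inequality, which you handle the same way the paper does.
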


Lemma \ref{lm3} demonstrates that in finite samples, the endogeneity test
based on $t_{CF}$ has the largest rejection probability compared with
$t_{H_{j}}$\ ($j=1,2,3$), although these four tests are asymptotically
equivalent under the null hypothesis and local alternatives where $\hat{\beta
}_{ols}-\hat{\beta}_{2sls}=o_{p}(1)$.

\section{Conclusion\label{Sec4}}

This paper explores the endogeneity test using the CF approach in linear IV
models. We find that the OLS estimator of the coefficients of the generated CF
is a linear transformation of the difference between the OLS and 2SLS
estimators of the coefficients of endogenous regressors. This finding allows
us to demonstrate that the endogeneity test using the CF approach is a variant
of the Hausman test. In addition, we establish a numerical order among the
test statistics employed in the Hausman tests and the endogeneity test using
the CF approach. This numerical order highlights that the endogeneity test
using the CF approach exhibits the highest rejection probability in finite samples.

\bigskip

{\small
\bibliographystyle{econometrica}
\bibliography{Hausman}
}

\newpage

\appendix

\begin{center}
{\huge Appendix}
\end{center}

\bigskip

\section{Proofs}

\begin{proof}
[Proof of Lemma \ref{lm1}]By partitioned regression/partialling out formula,
\begin{equation}
\left(
\begin{array}
[c]{c}%
\hat{\theta}_{cf}\\
\hat{\rho}_{cf}%
\end{array}
\right)  =\left(
\begin{array}
[c]{c}%
\left(  X^{\top}M_{\hat{V}}X\right)  ^{-1}X^{\top}M_{\hat{V}}Y_{2}\\
(\hat{V}^{\top}M_{X}\hat{V})^{-1}\hat{V}^{\top}M_{X}Y_{2}%
\end{array}
\right)  . \label{P_L1_0}%
\end{equation}
Since $X^{\top}\hat{V}=X^{\top}M_{Z}Y_{1}=(Y_{1}^{\top}M_{Z}Y_{1},0_{d_{y_{1}%
}\times d_{z_{1}}})^{\top}$\ and\ $\hat{V}^{\top}\hat{V}=Y_{1}^{\top}%
M_{Z}Y_{1}$, we have%
\[
X^{\top}M_{\hat{V}}X=X^{\top}X-\left(
\begin{array}
[c]{cc}%
Y_{1}^{\top}M_{Z}Y_{1} & 0_{d_{y_{1}}\times d_{z_{1}}}\\
0_{d_{z_{1}}\times d_{y_{1}}} & 0_{d_{z_{1}}\times d_{z_{1}}}%
\end{array}
\right)  =\left(
\begin{array}
[c]{cc}%
Y_{1}^{\top}P_{Z}Y_{1} & Y_{1}^{\top}Z_{1}\\
Z_{1}^{\top}Y_{1} & Z_{1}^{\top}Z_{1}%
\end{array}
\right)  =X^{\top}P_{Z}X
\]
and%
\[
X^{\top}M_{\hat{V}}Y_{2}=X^{\top}Y_{2}-\left(
\begin{array}
[c]{c}%
Y_{1}^{\top}M_{Z}Y_{2}\\
0_{d_{z_{1}}\times1}%
\end{array}
\right)  =\left(
\begin{array}
[c]{c}%
Y_{1}^{\top}P_{Z}Y_{2}\\
Z_{1}^{\top}Y_{2}%
\end{array}
\right)  =X^{\top}P_{Z}Y_{2},
\]
which together with (\ref{P_L1_0}) implies that%
\begin{equation}
\hat{\theta}_{cf}=(X^{\top}P_{Z}X)^{-1}X^{\top}P_{Z}Y_{2}=\hat{\theta}_{2sls}.
\label{P_L1_1}%
\end{equation}
By the definitions of $\hat{\theta}_{ols}$ and $\hat{\theta}_{2sls}$, it is
easy to show that%
\begin{equation}
\hat{\beta}_{ols}=(Y_{1}^{\top}M_{Z_{1}}Y_{1})^{-1}Y_{1}^{\top}M_{Z_{1}}%
Y_{2}\text{ \  \  \ and \  \  \ }\hat{\beta}_{2sls}=(Y_{1}^{\top}(P_{Z}-P_{Z_{1}%
})Y_{1})^{-1}Y_{1}^{\top}(P_{Z}-P_{Z_{1}})Y_{2}. \label{P_L1_2}%
\end{equation}
Moreover,%
\begin{align}
\hat{V}^{\top}M_{X}\hat{V}  &  =Y_{1}^{\top}M_{Z}Y_{1}-(Y_{1}^{\top}M_{Z}%
Y_{1},0_{d_{y_{1}}\times d_{z_{1}}})(X^{\top}X)^{-1}(Y_{1}^{\top}M_{Z}%
Y_{1},0_{d_{y_{1}}\times d_{z_{1}}})^{\prime}\nonumber \\
&  =Y_{1}^{\top}M_{Z}Y_{1}-Y_{1}^{\top}M_{Z}Y_{1}(Y_{1}^{\top}M_{Z_{1}}%
Y_{1})^{-1}Y_{1}^{\top}M_{Z}Y_{1}\nonumber \\
&  =Y_{1}^{\top}(P_{Z}-P_{Z_{1}})Y_{1}(Y_{1}^{\top}M_{Z_{1}}Y_{1})^{-1}%
Y_{1}^{\top}M_{Z}Y_{1}, \label{P_L1_3}%
\end{align}
and%
\begin{align}
\hat{V}^{\top}M_{X}Y_{2}  &  =Y_{1}^{\top}M_{Z}Y_{2}-(Y_{1}^{\top}M_{Z}%
Y_{1},0_{d_{y_{1}}\times d_{z_{1}}})(X^{\top}X)^{-1}X^{\top}Y_{2}\nonumber \\
&  =Y_{1}^{\top}M_{Z}Y_{2}-Y_{1}^{\top}M_{Z}Y_{1}(Y_{1}^{\top}M_{Z_{1}}%
Y_{1})^{-1}Y_{1}^{\top}M_{Z_{1}}Y_{2}\nonumber \\
&  =Y_{1}^{\top}(P_{Z}-P_{Z_{1}})Y_{1}(Y_{1}^{\top}M_{Z_{1}}Y_{1})^{-1}%
Y_{1}^{\top}M_{Z_{1}}Y_{2}-Y_{1}^{\top}(P_{Z}-P_{Z_{1}})Y_{2},\nonumber \\
&  =Y_{1}^{\top}(P_{Z}-P_{Z_{1}})Y_{1}(\hat{\beta}_{ols}-\hat{\beta}_{2sls})
\label{P_L1_3b}%
\end{align}
where the second equality is by the partitioned regression formula on
$(X^{\top}X)^{-1}X^{\top}Y_{2}$, and the last equality is by\ (\ref{P_L1_2}%
).\ Combining the expressions in (\ref{P_L1_3}) and (\ref{P_L1_3b}) yields%
\begin{equation}
\hat{\rho}_{cf}=(\hat{V}^{\top}M_{X}\hat{V})^{-1}\hat{V}^{\top}M_{X}%
Y_{2}=(Y_{1}^{\top}M_{Z}Y_{1})^{-1}Y_{1}^{\top}M_{Z_{1}}Y_{1}(\hat{\beta
}_{ols}-\hat{\beta}_{2sls}). \label{P_L1_4}%
\end{equation}
The claim in (\ref{lm1_1}) follows from (\ref{P_L1_1}) and (\ref{P_L1_4}).

By the definition of $\widehat{Asv}(\hat{\rho}_{cf})$ in (\ref{CV_sigma}) and
the expression in (\ref{P_L1_3}),%
\begin{equation}
\widehat{Asv}(\hat{\rho}_{cf})=\hat{\sigma}_{u}^{2}(Y_{1}^{\top}M_{Z}%
Y_{1})^{-1}Y_{1}^{\top}M_{Z_{1}}Y_{1}(Y_{1}^{\top}(P_{Z}-P_{Z_{1}})Y_{1}%
)^{-1}. \label{P_L2_1}%
\end{equation}
Using (\ref{lm1_1}) and (\ref{P_L2_1}), we have%
\begin{align}
\hat{\rho}_{cf}^{\top}(\widehat{Asv}(\hat{\rho}_{cf}))^{-1}\hat{\rho}_{cf}  &
=\hat{\rho}_{cf}^{\top}\frac{Y_{1}^{\top}(P_{Z}-P_{Z_{1}})Y_{1}(Y_{1}^{\top
}M_{Z_{1}}Y_{1})^{-1}Y_{1}^{\top}M_{Z}Y_{1}}{\hat{\sigma}_{u}^{2}}\hat{\rho
}_{cf}\nonumber \\
&  =(\hat{\beta}_{ols}-\hat{\beta}_{2sls})^{\top}\frac{Y_{1}^{\top}M_{Z_{1}%
}Y_{1}(Y_{1}^{\top}M_{Z}Y_{1})^{-1}Y_{1}^{\top}(P_{Z}-P_{Z_{1}})Y_{1}}%
{\hat{\sigma}_{u}^{2}}(\hat{\beta}_{ols}-\hat{\beta}_{2sls}). \label{P_L2_2}%
\end{align}
Since $M_{Z}=M_{Z_{1}}-P_{Z}+P_{Z_{1}}$ and $\hat{Y}_{1}=P_{Z}Y_{1}$, we
derive%
\begin{align*}
&  (Y_{1}^{\top}(P_{Z}-P_{Z_{1}})Y_{1})^{-1}Y_{1}^{\top}M_{Z}Y_{1}(Y_{1}%
^{\top}M_{Z_{1}}Y_{1})^{-1}\\
&  =(Y_{1}^{\top}(P_{Z}-P_{Z_{1}})Y_{1})^{-1}Y_{1}^{\top}(M_{Z_{1}}%
-P_{Z}+P_{Z_{1}})Y_{1}(Y_{1}^{\top}M_{Z_{1}}Y_{1})^{-1}\\
&  =(Y_{1}^{\top}(P_{Z}-P_{Z_{1}})Y_{1})^{-1}-(Y_{1}^{\top}M_{Z_{1}}%
Y_{1})^{-1}\\
&  =(\hat{Y}_{1}^{\top}M_{Z_{1}}\hat{Y}_{1})^{-1}-(Y_{1}^{\top}M_{Z_{1}}%
Y_{1})^{-1},
\end{align*}
which together with (\ref{P_L2_2}) proves the second claim of the lemma.
\end{proof}

\bigskip

\begin{proof}
[Proof of Lemma \ref{lm2}]By the definition of $\hat{\theta}_{ols}$ and
$\hat{\theta}_{2sls}$, we can write
\begin{align}
\hat{\theta}_{ols}-\hat{\theta}_{2sls}=  &  (X^{\top}X)^{-1}X^{\top}M_{Z}%
Y_{2}+(X^{\top}X)^{-1}X^{\top}P_{Z}Y_{2}-\hat{\theta}_{2sls}\nonumber \\
=  &  (X^{\top}X)^{-1}X^{\top}M_{Z}Y_{2}+(X^{\top}X)^{-1}(X^{\top}%
P_{Z}X-X^{\top}X)\hat{\theta}_{2sls}\nonumber \\
=  &  (X^{\top}X)^{-1}(X^{\top}M_{Z_{1}}Y_{2}-X^{\top}M_{Z_{1}}X\hat{\theta
}_{2sls})\nonumber \\
&  -(X^{\top}X)^{-1}(X^{\top}(P_{Z}-P_{Z_{1}})Y_{2}-X^{\top}(P_{Z}-P_{Z_{1}%
})X\hat{\theta}_{2sls})\nonumber \\
=  &  (X^{\top}X)^{-1}(X^{\top}M_{Z_{1}}Y_{2}-X^{\top}M_{Z_{1}}X\hat{\theta
}_{2sls}). \label{P_L3_1}%
\end{align}
Applying the formula for the inverse of block matrix to $X^{\top}%
X=(Y_{1},Z_{1})^{\top}(Y_{1},Z_{1})$ and elementary matrix operations yields%
\begin{align*}
&  (X^{\top}X)^{-1}(X^{\top}M_{Z_{1}}Y_{2}-X^{\top}M_{Z_{1}}X\hat{\theta
}_{2sls})\\
&  =\left(
\begin{array}
[c]{c}%
(Y_{1}^{T}M_{Z_{1}}Y_{1})^{-1}\\
-(Z_{1}^{\top}Z_{1})^{-1}Z_{1}^{\top}Y_{1}(Y_{1}^{\top}M_{Z_{1}}Y_{1})^{-1}%
\end{array}
\right)  (Y_{1}^{\top}M_{Z_{1}}Y_{1}-Y_{1}^{\top}M_{Z_{1}}Y_{1}\hat{\beta
}_{2sls}).
\end{align*}
Note that
\[
Y_{1}^{\top}M_{Z_{1}}Y_{1}-Y_{1}^{\top}M_{Z_{1}}Y_{1}\hat{\beta}_{2sls}%
=Y_{1}^{\top}M_{Z_{1}}Y_{1}(\hat{\beta}_{ols}-\hat{\beta}_{2sls}).
\]
Therefore, we can further derive%
\begin{equation}
\hat{\theta}_{ols}-\hat{\theta}_{2sls}=\left(
\begin{array}
[c]{c}%
I_{d_{y_{1}}}\\
-(Z_{1}^{\top}Z_{1})^{-1}Z_{1}^{\top}Y_{1}%
\end{array}
\right)  (\hat{\beta}_{ols}-\hat{\beta}_{2sls}). \label{P_L3_2}%
\end{equation}

To show the first equality in (\ref{lm2_1}), we write
\begin{equation}
\hat{u}=Y_{2}-X\hat{\theta}_{cf}-\hat{V}\hat{\rho}_{cf}=Y_{2}-X\hat{\theta
}_{ols}+X(\hat{\theta}_{ols}-\hat{\theta}_{2sls})-\hat{V}\hat{\rho}_{cf}.
\label{P_L3_3}%
\end{equation}
Note that $X^{T}(Y_{2}-X\hat{\theta}_{ols})=0$ by the definition of
$\hat{\theta}_{ols}$. Some elementary matrix operations lead to%
\begin{align*}
\hat{V}^{T}(Y_{2}-X\hat{\theta}_{ols})  &  =Y_{1}^{\top}M_{Z}Y_{2}-Y_{1}%
^{\top}M_{Z}X\hat{\theta}_{ols}\\
&  =Y_{1}^{\top}M_{Z}Y_{2}-Y_{1}^{\top}M_{Z}Y_{1}\hat{\beta}_{ols}\\
&  =Y_{1}^{\top}M_{Z_{1}}Y_{2}-Y_{1}^{\top}M_{Z_{1}}Y_{1}\hat{\beta}%
_{ols}+Y_{1}^{\top}(P_{Z}-P_{Z_{1}})Y_{1}\hat{\beta}_{ols}-Y_{1}^{\top}%
(P_{Z}-P_{Z_{1}})Y_{2}\\
&  =Y_{1}^{\top}(P_{Z}-P_{Z_{1}})Y_{1}(\hat{\beta}_{ols}-\hat{\beta}_{2sls}).
\end{align*}
Hence, by Lemma \ref{lm1}
\begin{equation}
\hat{\rho}_{cf}^{T}\hat{V}^{T}(Y_{2}-X\hat{\theta}_{ols})=(\hat{\beta}%
_{ols}-\hat{\beta}_{2sls})^{\top}Y_{1}^{\top}M_{Z_{1}}Y_{1}(Y_{1}^{\top}%
M_{Z}Y_{1})^{-1}Y_{1}^{\top}(P_{Z}-P_{Z_{1}})Y_{1}(\hat{\beta}_{ols}%
-\hat{\beta}_{2sls}). \label{P_L3_4}%
\end{equation}
Moreover, observe that by (\ref{P_L3_2}) and Lemma \ref{lm1}
\begin{align*}
X(\hat{\theta}_{ols}-\hat{\theta}_{2sls})-\hat{V}\hat{\rho}_{cf}  &
=M_{Z_{1}}Y_{1}(\hat{\beta}_{ols}-\hat{\beta}_{2sls})-M_{Z}Y_{1}(Y_{1}^{\top
}M_{Z}Y_{1})^{-1}(Y_{1}^{\top}M_{Z_{1}}Y_{1})(\hat{\beta}_{ols}-\hat{\beta
}_{2sls})\\
&  =(M_{Z_{1}}Y_{1}-M_{Z}Y_{1}(Y_{1}^{\top}M_{Z}Y_{1})^{-1}Y_{1}^{\top
}M_{Z_{1}}Y_{1})(\hat{\beta}_{ols}-\hat{\beta}_{2sls})
\end{align*}
and by elementary matrix operations
\begin{align*}
&  (M_{Z_{1}}Y_{1}-M_{Z}Y_{1}(Y_{1}^{\top}M_{Z}Y_{1})^{-1}Y_{1}^{\top}%
M_{Z_{1}}Y_{1})^{\top}(M_{Z_{1}}Y_{1}-M_{Z}Y_{1}(Y_{1}^{\top}M_{Z}Y_{1}%
)^{-1}Y_{1}^{\top}M_{Z_{1}}Y_{1})\\
&  =Y_{1}^{\top}M_{Z_{1}}Y_{1}(Y_{1}^{\top}M_{Z}Y_{1})^{-1}Y_{1}^{\top
}M_{Z_{1}}Y_{1}-Y_{1}^{\top}M_{Z_{1}}Y_{1}\\
&  =Y_{1}^{\top}M_{Z_{1}}Y_{1}(Y_{1}^{\top}M_{Z}Y_{1})^{-1}Y_{1}^{\top}%
(P_{Z}-P_{Z_{1}})Y_{1}.
\end{align*}
Therefore, we have%
\begin{align}
&  (X(\hat{\theta}_{ols}-\hat{\theta}_{2sls})-\hat{V}\hat{\rho}_{cf}%
)^{T}(X(\hat{\theta}_{ols}-\hat{\theta}_{2sls})-\hat{V}\hat{\rho}%
_{cf})\nonumber \\
&  =(\hat{\beta}_{ols}-\hat{\beta}_{2sls})^{\top}Y_{1}^{\top}M_{Z_{1}}%
Y_{1}(Y_{1}^{\top}M_{Z}Y_{1})^{-1}Y_{1}^{\top}(P_{Z}-P_{Z_{1}})Y_{1}%
(\hat{\beta}_{ols}-\hat{\beta}_{2sls}). \label{P_L3_5}%
\end{align}
Collecting the results in (\ref{P_L3_3}), (\ref{P_L3_4}), and (\ref{P_L3_5}),
we obtain
\begin{align*}
n^{-1}\hat{u}^{\top}\hat{u}=  &  n^{-1}(Y_{2}-X\hat{\theta}_{ols})^{T}%
(Y_{2}-X\hat{\theta}_{ols})-2n^{-1}\hat{\rho}_{cf}^{T}\hat{V}^{T}(Y_{2}%
-X\hat{\theta}_{ols})\\
&  +n^{-1}(X(\hat{\theta}_{ols}-\hat{\theta}_{2sls})-\hat{V}\hat{\rho}%
_{cf})^{T}(X(\hat{\theta}_{ols}-\hat{\theta}_{2sls})-\hat{V}\hat{\rho}_{cf})\\
=  &  \hat{\sigma}_{ols}^{2}-n^{-1}(\hat{\beta}_{ols}-\hat{\beta}%
_{2sls})^{\top}Y_{1}^{\top}M_{Z_{1}}Y_{1}(Y_{1}^{\top}M_{Z}Y_{1})^{-1}%
Y_{1}^{\top}(P_{Z}-P_{Z_{1}})Y_{1}(\hat{\beta}_{ols}-\hat{\beta}_{2sls})\\
=  &  \hat{\sigma}_{ols}^{2}-n^{-1}(\hat{\beta}_{ols}-\hat{\beta}%
_{2sls})^{\top}((Y_{1}^{\top}(P_{Z}-P_{Z_{1}})Y_{1})^{-1}-(Y_{1}^{\top
}M_{Z_{1}}Y_{1})^{-1})^{-1}(\hat{\beta}_{ols}-\hat{\beta}_{2sls})\\
=  &  \hat{\sigma}_{ols}^{2}\left(  1-\frac{t_{H,n}\left(  \hat{\sigma}%
_{ols}^{2},\hat{\sigma}_{ols}^{2}\right)  }{n}\right)  ,
\end{align*}
which together with the definition of $t_{H_{1}}$ proves the first equality in
(\ref{lm2_1}).

To show the second equality in (\ref{lm2_1}), note that by Lemma \ref{lm1}
\begin{equation}
\hat{u}=Y_{2}-X\hat{\theta}_{cf}-\hat{V}\hat{\rho}_{cf}=Y_{2}-X\hat{\theta
}_{2sls}-\hat{V}\hat{\rho}_{cf}. \label{P_L3_6}%
\end{equation}
By the definitions of $\hat{\beta}_{ols}$ and $\hat{\beta}_{2sls}$, we have%
\begin{align}
\hat{V}^{T}(Y_{2}-X\hat{\theta}_{2sls})=  &  Y_{1}^{\top}M_{Z}Y_{2}%
-Y_{1}^{\top}M_{Z}X\hat{\theta}_{2sls}\nonumber \\
=  &  Y_{1}^{\top}M_{Z}Y_{2}-Y_{1}^{\top}M_{Z}Y_{1}\hat{\beta}_{2sls}%
\nonumber \\
=  &  Y_{1}^{\top}M_{Z_{1}}Y_{2}-Y_{1}^{\top}M_{Z_{1}}Y_{1}\hat{\beta}%
_{2sls}-Y_{1}^{\top}(P_{Z}-P_{Z_{1}})Y_{2}+Y_{1}^{\top}(P_{Z}-P_{Z_{1}}%
)Y_{1}\hat{\beta}_{2sls}\nonumber \\
=  &  Y_{1}^{\top}M_{Z_{1}}Y_{1}(\hat{\beta}_{ols}-\hat{\beta}_{2sls}).
\label{P_L3_7}%
\end{align}
Therefore,%
\begin{equation}
\hat{\rho}_{cf}^{T}\hat{V}^{T}(Y_{2}-X\hat{\theta}_{2sls})=(\hat{\beta}%
_{ols}-\hat{\beta}_{2sls})^{\top}Y_{1}^{\top}M_{Z_{1}}Y_{1}(Y_{1}^{\top}%
M_{Z}Y_{1})^{-1}Y_{1}^{\top}M_{Z_{1}}Y_{1}(\hat{\beta}_{ols}-\hat{\beta
}_{2sls}). \label{P_L3_8}%
\end{equation}
Moreover,%
\[
\hat{\rho}_{cf}^{\top}\hat{V}^{T}\hat{V}\hat{\rho}_{cf}=(\hat{\beta}%
_{ols}-\hat{\beta}_{2sls})^{\top}Y_{1}^{\top}M_{Z_{1}}Y_{1}(Y_{1}^{\top}%
M_{Z}Y_{1})^{-1}Y_{1}^{\top}M_{Z_{1}}Y_{1}(\hat{\beta}_{ols}-\hat{\beta
}_{2sls}),
\]
which combined with (\ref{P_L3_6}), (\ref{P_L3_7})\ and (\ref{P_L3_8}) implies
that%
\begin{align*}
n^{-1}\hat{u}^{\top}\hat{u}=  &  n^{-1}(Y_{2}-X\hat{\theta}_{2sls})^{\top
}(Y_{2}-X\hat{\theta}_{2sls})+n^{-1}\hat{\rho}_{cf}^{\top}\hat{V}^{T}\hat
{V}\hat{\rho}_{cf}-2n^{-1}\hat{\rho}_{cf}^{T}\hat{V}^{T}(Y_{2}-X\hat{\theta
}_{2sls})\\
=  &  \hat{\sigma}_{2sls}^{2}-(\hat{\beta}_{ols}-\hat{\beta}_{2sls})^{\top
}Y_{1}^{\top}M_{Z_{1}}Y_{1}(Y_{1}^{\top}M_{Z}Y_{1})^{-1}Y_{1}^{\top}M_{Z_{1}%
}Y_{1}(\hat{\beta}_{ols}-\hat{\beta}_{2sls})\\
=  &  \hat{\sigma}_{2sls}^{2}-n^{-1}(\hat{\beta}_{ols}-\hat{\beta}%
_{2sls})^{\top}Y_{1}^{\top}M_{Z_{1}}Y_{1}(Y_{1}^{\top}M_{Z}Y_{1})^{-1}%
Y_{1}^{\top}(P_{Z}-P_{Z_{1}})Y_{1}(\hat{\beta}_{ols}-\hat{\beta}_{2sls})\\
&  -n^{-1}(\hat{\beta}_{ols}-\hat{\beta}_{2sls})^{\top}Y_{1}^{\top}M_{Z_{1}%
}Y_{1}(Y_{1}^{\top}M_{Z}Y_{1})^{-1}Y_{1}^{\top}M_{Z}Y_{1}(\hat{\beta}%
_{ols}-\hat{\beta}_{2sls})\\
=  &  \hat{\sigma}_{2sls}^{2}-\hat{\sigma}_{2sls}^{2}\frac{t_{H,n}\left(
\hat{\sigma}_{2sls}^{2},\hat{\sigma}_{2sls}^{2}\right)  }{n}-(\hat{\beta
}_{ols}-\hat{\beta}_{2sls})^{\top}\frac{Y_{1}^{\top}M_{Z_{1}}Y_{1}}{n}%
(\hat{\beta}_{ols}-\hat{\beta}_{2sls}).
\end{align*}
This together with the definitions of $t_{H_{2}}$ and $H_{n}$ completes the proof.
\end{proof}

\bigskip

\begin{proof}
[Proof of Lemma \ref{lm3}]Note that by $\hat{\beta}_{ols}\neq \hat{\beta
}_{2sls}$ and (\ref{PD_1}), we have
\begin{equation}
\hat{\sigma}_{ols}^{2}t_{H_{1}}=\hat{\sigma}_{2sls}^{2}t_{H_{2}}=\hat{\sigma
}_{u}^{2}t_{CF}=(\hat{\beta}_{ols}-\hat{\beta}_{2sls})^{\top}((\hat{Y}%
_{1}^{\top}M_{Z_{1}}\hat{Y}_{1})^{-1}-(Y_{1}^{\top}M_{Z_{1}}Y_{1})^{-1}%
)^{-1}(\hat{\beta}_{ols}-\hat{\beta}_{2sls})>0 \label{P_L4_1}%
\end{equation}
and
\begin{equation}
\hat{\sigma}_{2sls}^{2}H_{n}=n^{-1}(\hat{\beta}_{ols}-\hat{\beta}%
_{2sls})^{\top}(Y_{1}^{\top}M_{Z_{1}}Y_{1})(\hat{\beta}_{ols}-\hat{\beta
}_{2sls})>0. \label{P_L4_2}%
\end{equation}
Therefore, the equalities in (\ref{lm2_1}) of Lemma \ref{lm2} imply that
\begin{equation}
\hat{\sigma}_{2sls}^{2}>\hat{\sigma}_{ols}^{2}>\hat{\sigma}_{u}^{2}.
\label{P_L4_3}%
\end{equation}
Combining (\ref{P_L4_1}) and (\ref{P_L4_3}) yields $t_{CF}>t_{H_{1}}>t_{H_{2}%
}$.\ By (\ref{P_L4_3}) and $(Y_{1}^{\top}M_{Z_{1}}Y_{1})^{-1}>0$, we can show
that
\[
\hat{\sigma}_{2sls}^{2}(\hat{Y}_{1}^{\top}M_{Z_{1}}\hat{Y}_{1})^{-1}%
-\hat{\sigma}_{ols}^{2}(Y_{1}^{\top}M_{Z_{1}}Y_{1})^{-1}>\hat{\sigma}%
_{2sls}^{2}[(\hat{Y}_{1}^{\top}M_{Z_{1}}\hat{Y}_{1})^{-1}-(Y_{1}^{\top
}M_{Z_{1}}Y_{1})^{-1}],
\]
which together with $\hat{\beta}_{ols}\neq \hat{\beta}_{2sls}$ and the
definitions of $t_{H_{2}}$ and $t_{H_{3}}$ implies that $t_{H_{2}}>t_{H_{3}}$.
\end{proof}

\end{document}